\newtheorem{theorem}{Theorem}[section]
\newtheorem{corollary}[theorem]{Corollary}
\newtheorem{lemma}[theorem]{Lemma}
\theoremstyle{definition}
\theoremstyle{remark}
\newtheorem{remark}[theorem]{Remark}
\numberwithin{equation}{section}
\newcommand{\al}{\alpha}
\newcommand{\de}{\delta}
\newcommand{\la}{\lambda}
\newcommand{\om}{\omega}
\newcommand{\te}{\theta}
\newcommand{\vp}{\varphi}
\newcommand{\De}{\Delta}
\newcommand{\Si}{\Sigma}
\newcommand{\cU}{{\mathcal U}}
\def\RR{\mathbb{R}}
\renewcommand\SS{\mathbb{S}}
\newcommand{\cM}{{\mathcal M}}
\newcommand\hk{{\widehat k}}
\newcommand{\pd}{\partial}
\newcommand\minus\backslash
\newcommand\lan\langle
\newcommand\ran\rangle
\newcommand{\rank}{\operatorname{rank}}
\newcommand{\e}{{e}}
\DeclareMathOperator\Real{Re}
\renewcommand\leq\leqslant
\renewcommand\geq\geqslant
\newlength{\intwidth}
\newcommand\loc{_{\mathrm{loc}}}
\DeclareMathOperator\Imag{Im}
\begin{document}

\title[Knotted zeros in the eigenfunctions of the harmonic
oscillator]{A problem of Berry and knotted zeros\\ in the eigenfunctions of the harmonic oscillator}

\author{Alberto Enciso}
\address{Instituto de Ciencias Matem\'aticas, Consejo Superior de
  Investigaciones Cient\'\i ficas, 28049 Madrid, Spain}
\email{aenciso@icmat.es, david.hartley@icmat.es, dperalta@icmat.es}

\author{David Hartley}

\author{Daniel Peralta-Salas}

%
%
\begin{abstract}
We prove that, given any finite link $L$ in $\RR^3$, there is a
high-energy complex-valued eigenfunction of the harmonic oscillator
such that its nodal set contains a union of connected components 
diffeomorphic to~$L$. This solves a problem of Berry on the existence
of knotted zeros in bound states of a quantum system.
\end{abstract}
\maketitle

\section{Introduction}

In~\cite{Be01}, Berry conjectures that there should be complex-valued eigenfunctions
of the harmonic oscillator in $\RR^3$ whose nodal set
$\psi^{-1}(0)$ has knotted connected components, and raises the question
of whether there can be eigenfunctions of a quantum system whose nodal
set has components with higher order linking, as in the case of the
Borromean rings, see Fig.~\ref{fig:1}. Furthermore, Berry remarks that it should be possible
to construct these sets so that they are {\em structurally stable}\/
in the sense that any small enough perturbation of the corresponding
eigenfunction (in the $C^k$ norm with $k\geq1$) still has connected
components in the nodal set that are diffeomorphic to the knot or link under 
consideration.

\begin{figure}[t]
  \centering
  \subfigure{
\includegraphics[scale=0.1,angle=0]{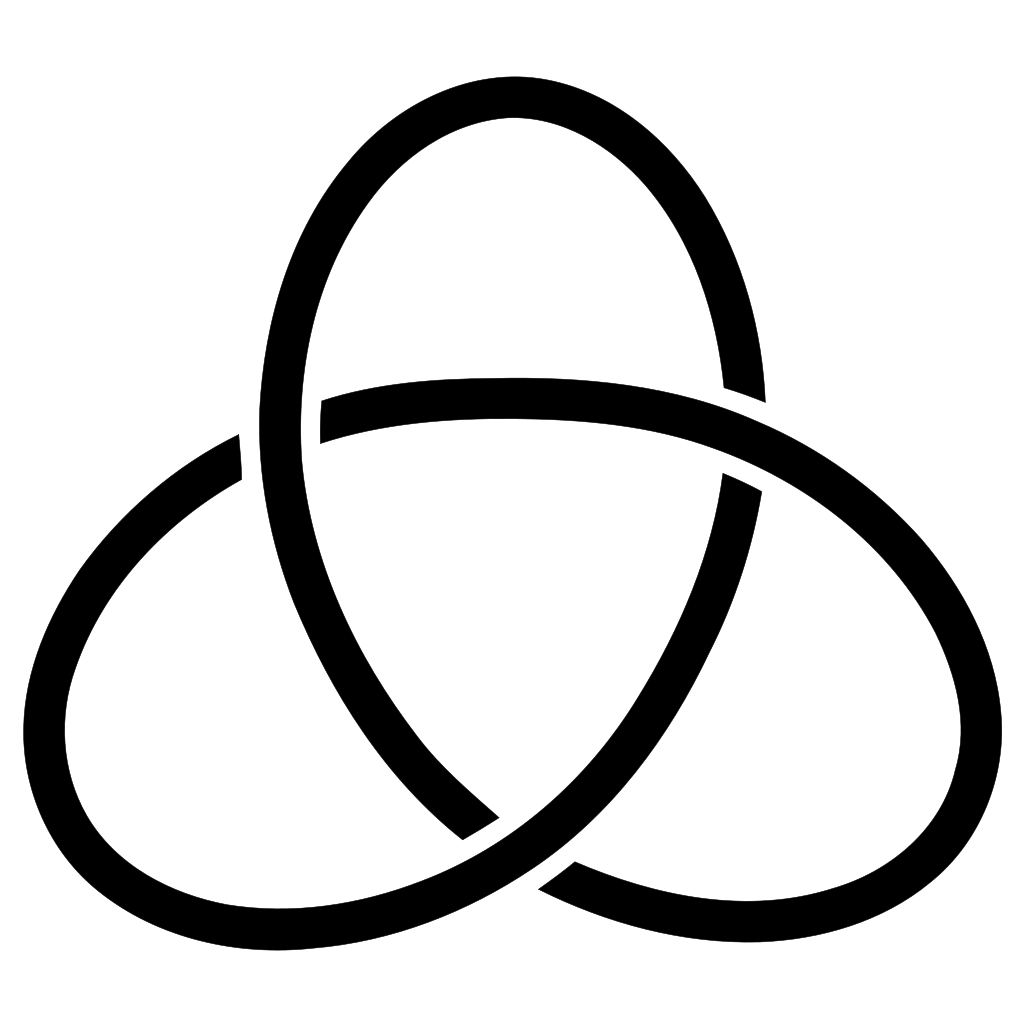}}\hspace{5em}
  \subfigure{
\includegraphics[scale=0.1]{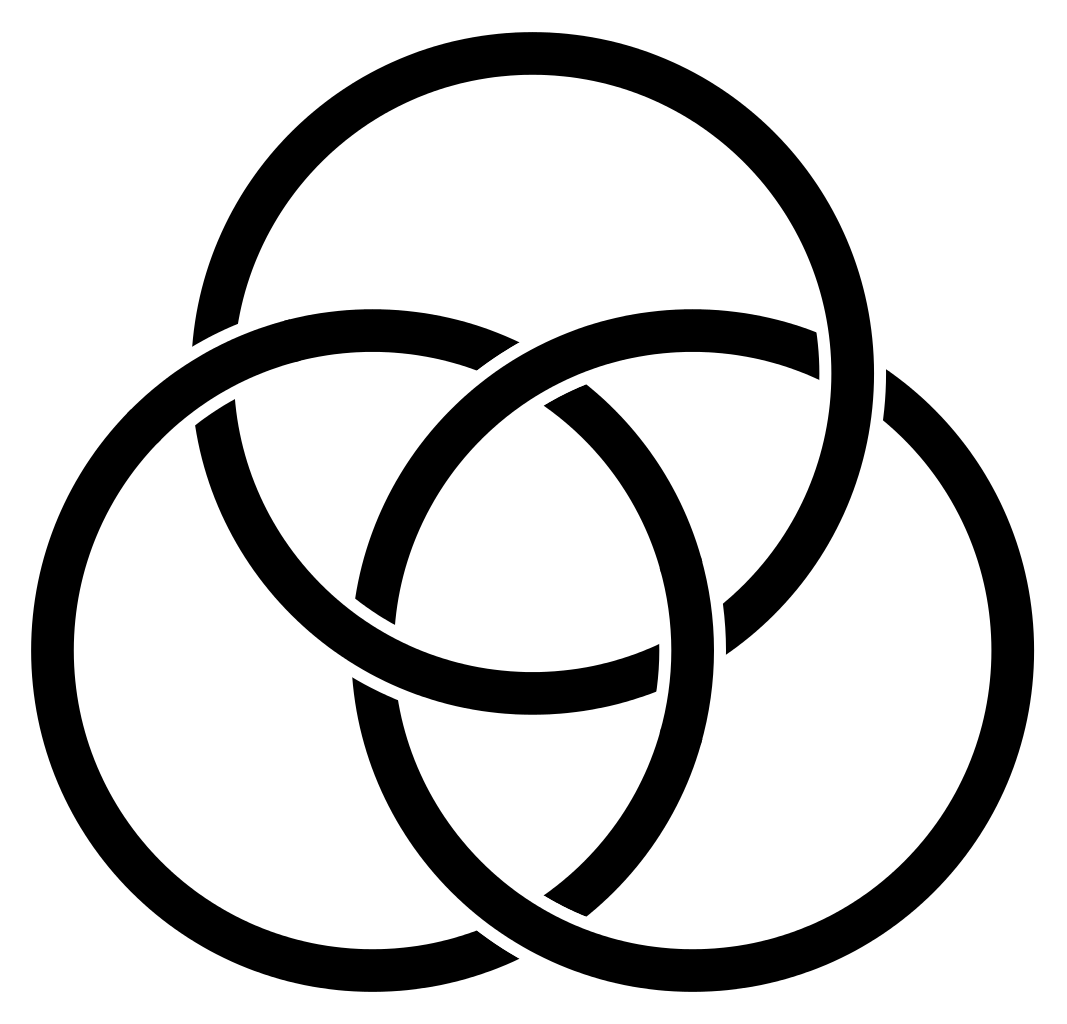}}
\caption{The problem involves showing that there are high-energy
  eigenfunctions $\psi$ of
the harmonic oscillator realizing links, e.g.\ the trefoil knot and the
Borromean rings depicted above, in their nodal set $\psi^{-1}(0)$.} \label{fig:1}
\end{figure}

As a side remark, let us recall~\cite{Be01,KL04} that a physical
motivation to study the nodal set of a quantum system is that it is
the locus of destructive interference of the wave function. It is
related to the existence of singularities (often called dislocations) of the phase
$\Imag (\log\psi)$ and of vortices in the current field $\Imag(\overline
\psi\, \nabla\psi)$. The existence of knotted structures of this type,
especially in optics and in fluid mechanics, has recently attracted
considerable attention, both from the theoretical~\cite{Annals,Acta} and
experimental~\cite{De10,Irvine} viewpoints.

The main result of this paper solves these problems of Berry by
showing that any finite link can be realized as a collection of 
connected components of the nodal set of a high-energy eigenfunction
of the harmonic oscillator, and that the link is structurally stable
in the same sense as above. Specifically, we have the following

\begin{theorem}\label{T.main}
  Let $L$ be any finite link in $\RR^3$. Then one can deform it with
  a diffeomorphism $\Phi$ of $\RR^3$ so that $\Phi(L)$ is the union of
  connected components of the nodal set $\psi^{-1}(0)$, where $\psi$ is a
  complex-valued eigenfunction of the harmonic oscillator in
  $\RR^3$. Furthermore, the link $\Phi(L)$ is structurally stable for
  the function~$\psi$.
\end{theorem}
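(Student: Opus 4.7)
The plan is to combine a topological construction of a local model for the link, a Runge-type approximation of that model by a global eigenfunction of the harmonic oscillator, and a structural stability argument. First I would construct a smooth complex-valued function $F_0$ on $\RR^3$ whose nodal set contains a diffeomorphic copy of $L$ and such that, on $L$, the gradients of $\Real F_0$ and $\Imag F_0$ are everywhere linearly independent — a \emph{transverse} nodal link. Using a tubular neighbourhood of $L$ together with a complex coordinate transverse to $L$, this is elementary.

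The main work is to find an eigenfunction $\psi$ of $H := -\Delta + |x|^2$ that is $C^1$-close to $F_0$ on a ball $B_R \supset L$. I would proceed in two stages. In \emph{Stage A (local solution with prescribed nodal set)}, fix a large eigenvalue $\lambda$ of $H$ and construct a local solution $F_\lambda$ of $(H-\lambda)F_\lambda = 0$ on $B_R$ whose nodal set is the same transverse link as $F_0$. In the semiclassical limit $\lambda\to\infty$ the eigenvalue equation is, after the rescaling $x = \lambda^{-1/2} y$, a small perturbation of the Helmholtz equation on the classically allowed region $\{|x|^2<\lambda\}$, so $F_\lambda$ can be produced by an appropriate superposition of complex Hermite/WKB-type building blocks whose coefficients are tuned so as to match $F_0$ to sufficient accuracy in a neighbourhood of $L$. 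In \emph{Stage B (Runge-type globalization)}, approximate $F_\lambda$ in $C^k(\overline{B_{R'}})$, $R'<R$, by a global eigenfunction $\psi$ in the eigenspace $V_\lambda := \ker(H-\lambda)$. The classical duality argument applies: a continuous linear functional on $C^k(\overline{B_{R'}})$ vanishing on every restriction of an element of $V_\lambda$ yields a distribution supported outside $B_{R'}$ that annihilates $V_\lambda$; because every element of $V_\lambda$ equals a polynomial times $e^{-|x|^2/2}$ and because $\dim V_\lambda$ grows quadratically in the energy level, unique continuation forces such a distribution to vanish.

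Once $\psi$ is $C^1$-close enough to $F_0$, the transversality of $(\Real F_0,\Imag F_0)$ to $(0,0)$ persists, and the implicit function theorem produces a diffeomorphism $\Phi$ of $\RR^3$, close to the identity, carrying $L$ onto a union of connected components of $\psi^{-1}(0)$. Since transversality is an open condition in $C^1$, the same conclusion survives any $C^1$-small perturbation of $\psi$, which gives the asserted structural stability.

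The main obstacle is Stage B: a Runge-type theorem that uses eigenfunctions of a \emph{single} eigenvalue $\lambda$, rather than arbitrary linear combinations of eigenfunctions with different eigenvalues. The discreteness of the spectrum forces $\lambda$ to be chosen in advance, while Stage A requires $B_R$ to lie well inside the region $\{|x|^2<\lambda\}$, so $\lambda$ must also be taken large enough for a given link $L$. Coordinating the two stages quantitatively so that the local error in Stage A and the globalization error in Stage B together produce a $C^1$-small perturbation of the model $F_0$ is the central balancing act of the argument.
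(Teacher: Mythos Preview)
Your Stage~B contains a genuine gap. The Runge/Hahn--Banach scheme you describe cannot work as stated because, for each fixed $\lambda$, the eigenspace $V_\lambda=\ker(H-\lambda)$ is \emph{finite-dimensional}: the restrictions of its elements to $\overline{B_{R'}}$ span a finite-dimensional subspace of $C^k(\overline{B_{R'}})$, and a finite-dimensional subspace is never dense. Your duality sketch is also reversed: the annihilating functional is a distribution supported \emph{in} $\overline{B_{R'}}$, not outside it, and the fact that $\dim V_\lambda$ grows with $\lambda$ does not, by itself, force such a distribution to annihilate your local solution $F_\lambda$. Unique continuation is a statement about solutions of a PDE vanishing on open sets, not about compactly supported distributions annihilating a finite family of polynomials times a Gaussian. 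So the assertion that $F_\lambda$ can be $C^k$-approximated on $B_{R'}$ by an element of $V_\lambda$ is unsupported.

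The paper avoids this obstacle by inserting the Helmholtz equation as an intermediary. The Runge-type argument (Hahn--Banach plus pole-sweeping with the Green's function $\cos|x|/4\pi|x|$) is carried out at the Helmholtz level, where the solution space is infinite-dimensional, producing a global Helmholtz solution $\varphi$ whose nodal set contains a copy of $L$ transversally. One then truncates the Fourier--Bessel expansion of $\varphi$ to a \emph{finite} sum $\sum_{l\le l_0}\sum_m c_{lm}\,j_l(r)\,Y_{lm}$. The passage from Helmholtz to harmonic oscillator is done term by term via Hilb's asymptotics for Laguerre polynomials, which give $\psi_{klm}(x/\sqrt{\lambda_{kl}})\approx A_{kl}\,j_l(r)\,Y_{lm}$ for large $k$. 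The essential ingredient you are missing is the \emph{accidental degeneracy} $\lambda_{kl}=4k+2l+3$: choosing $\hat k_l=\hat k-l/2$ (for even $l$, hence the evenness of $\varphi$) puts all the building blocks $\psi_{\hat k_l\,l\,m}$ at the \emph{same} eigenvalue $4\hat k+3$. This is what replaces your Stage~B, and it is a finite, explicit construction rather than an abstract density argument.
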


We recall that the eigenfunctions of the harmonic oscillator are the
square-integrable functions $\psi$ satisfying the equation
\begin{equation}\label{harm}
-\De \psi + |x|^2 \psi=\la \psi
\end{equation}
in $\RR^3$. It is well-known that the eigenvalues are of the form 
\[
\la=2N+3\,,
\]
with $N$ a nonnegative integer, and that the degeneracy of the
corresponding eigen\-space is $\frac12(N+1)(N+2)$.

The key idea of the proof of Theorem~\ref{T.main} is that, using techniques
introduced in~\cite{Annals,Adv,Acta}, one can prove that there are 
complex-valued solutions to the Helmholtz
equation
\[
\De \vp+\vp=0
\]
in $\RR^3$, such that the link $L$ is a union of connected components 
of the nodal set $\vp^{-1}(0)$, up to a diffeomorphism. This is pertinent to the study of the eigenvalues of the harmonic
oscillator because, in balls of radius $\la^{-1/2}$, the high-energy asymptotics of
the eigenfunctions are determined by the Helmholtz
equation (see e.g.~\cite{EJN,survey}). Heuristically, one can understand why this is true by
introducing the rescaled variable $\tilde x:=\la^{1/2}\, x$, in terms of 
which Eq.~\eqref{harm} is read as
\[
\De_{\tilde x}\psi +\psi = \frac{|\tilde x|^2\, \psi}{\la^2}\,.
\]
The way to make this precise is by computing the high-order
asymptotics of the Laguerre polynomials, which govern the radial
part of the eigenfunctions of the harmonic oscillator. Going over
the fine details we will see that the accidental 
degeneracy of the eigenvalues of the harmonic oscillator is an
essential ingredient of the proof too,
essentially because it ensures the existence of families of isoenergetic
eigenfunctions with a rich behavior in the angular variables.

The proof of Theorem~\ref{T.main} is given in Section~\ref{S.main},
although the proofs of two technical lemmas are relegated to
Sections~\ref{S.lemma1} and~\ref{S.lemma2}. To conclude this paper, in
Section~\ref{S.remark} we will state and discuss a 
higher-dimensional counterpart of the main theorem that can be proved
using the same argument.

\section{Proof of Theorem~\ref{T.main}}
\label{S.main}

Let us begin by fixing an orthogonal basis of eigenfunctions associated with the
harmonic oscillator Hamiltonian. Specifically, we will take
\begin{equation}\label{psiklm}
\psi_{klm}:= \e^{-\frac{r^2}2}\, r^l\, L^{l+\frac12}_k(r^2)\, Y_{lm}(\theta,\phi)\,,
\end{equation}
where $(r,\theta,\phi)$ are spherical coordinates and we are using the
standard notation for the Laguerre polynomials and the spherical
harmonics. Here the indices of the eigenfunctions range over the set
\[
k\geq0\,,\qquad l\geq0\,,\qquad -l\leq m\leq l
\]
and the eigenvalue corresponding to $\psi_{klm}$ is
\[
\la_{kl}:=4k+2l+3\,.
\]
Notice that the eigenvalue is independent of~$m$.

In the following lemma we will describe the behavior of the
eigenfunction $\psi_{klm}$ and its gradient for large values of
$k$. To state this result, we will use the notation $e_r:=x/r$ for the
unit vector in the radial direction and denote by $\nabla_{\SS^2}Y(\te,\phi)$ the
gradient (in the unit sphere) of a function $Y(\te,\phi)$ of the
angular variables. The proof of the lemma is given in Section~\ref{S.lemma1}.

\begin{lemma}\label{L.asymp}
Let us fix some integers~$l$ and~$m$ as above. Uniformly for $r\leq R$, the
eigenfunction $\psi_{klm}$ admits the asymptotic expansion
\begin{align*}
\psi_{klm}(x)&=A_{kl}\,\big[j_l(\sqrt{\la_{kl}}\, r)+ O(\tfrac1k)\big]\,
Y_{lm}(\theta,\phi)\,,\\
\nabla \psi_{klm}(x)&=\sqrt{\la_{kl}}\,
A_{kl}\,\big[j_l'(\sqrt{\la_{kl}}\, r)+ O(\tfrac1k)\big]\, Y_{lm}(\theta,\phi)\, e_r\\&+ A_{kl}\,\big[j_l(\sqrt{\la_{kl}}\, r)+O(\tfrac1k)\big]\,\frac{\nabla_{\SS^2}Y_{lm}(\theta,\phi)}{r},
\end{align*} 
as $k\to\infty$. Here $j_l$ is the spherical Bessel function of order
$l$ and $A_{kl}$ is a nonzero constant.
\end{lemma}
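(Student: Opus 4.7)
The strategy is to reduce Lemma \ref{L.asymp} to a classical asymptotic formula for the Laguerre polynomials, after which the rest is bookkeeping. The crucial observation is that $\lambda_{kl}=4k+2l+3=4N$, where $N:=k+(2l+3)/4=k+(\alpha+1)/2$ is precisely the shift parameter appearing in the Hilb--Perron asymptotic
\[
\frac{\Gamma(k+1)}{\Gamma(k+\alpha+1)}\, e^{-x/2}\, L_k^{(\alpha)}(x) = N^{-\alpha/2}\,x^{-\alpha/2}\, J_\alpha\!\bigl(2\sqrt{Nx}\bigr)\,\bigl[1+O(1/k)\bigr],
\]
valid uniformly for $x$ in any fixed compact subset of $[0,\infty)$. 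Indeed, setting $\alpha:=l+\tfrac12$ and $x:=r^2$ gives $2\sqrt{Nx}=\sqrt{\lambda_{kl}}\,r$, so the Bessel argument predicted by the formula is exactly the one that appears in the statement of the lemma.

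The plan for the scalar asymptotic is as follows. First, I would substitute the above Hilb expansion into the definition \eqref{psiklm}, absorb the gamma quotient and the factor $N^{\alpha/2}$ into a constant, and use the half-integer identity $J_{l+1/2}(z)=\sqrt{2z/\pi}\,j_l(z)$. Since $e^{-r^2/2}$ is smooth and uniformly bounded on $[0,R]$, it multiplies cleanly through and one obtains
\[
\psi_{klm}(x)=A_{kl}\bigl[j_l(\sqrt{\lambda_{kl}}\,r)+O(1/k)\bigr]Y_{lm}(\theta,\phi),
\]
with $A_{kl}$ nonzero because the Bessel prefactors and the gamma ratio are. This handles the first displayed formula.

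For the gradient, I would use the spherical-coordinate decomposition $\nabla=e_r\,\partial_r+r^{-1}\nabla_{\SS^2}$. The angular piece produces the last summand in the statement without any extra work, since $Y_{lm}$ is the only angular factor. The radial derivative is the subtle point: one cannot simply differentiate the $O(1/k)$ error term, so instead I would rewrite $\partial_r$ of the radial factor using the Laguerre identities $\frac{d}{dx}L_k^{(\alpha)}(x)=-L_{k-1}^{(\alpha+1)}(x)$ and the three-term recursion, which reduces $\partial_r\bigl(e^{-r^2/2}r^l L_k^{l+1/2}(r^2)\bigr)$ to a linear combination of expressions of the same type (with indices shifted by one). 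Applying Hilb's formula once more to each such expression, and combining with the Bessel identity $\frac{d}{dz}\bigl(z^{-l}J_{l+1/2}(z)\bigr)=-z^{-l}J_{l+3/2}(z)$ (equivalently the spherical recursion for $j_l'$), yields the claimed formula with the same $\sqrt{\lambda_{kl}}\,A_{kl}$ normalization and an $O(1/k)$ error.

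The main technical obstacle is keeping the error uniformly $O(1/k)$ on $\{r\leq R\}$ rather than settling for $o(1)$, and in particular ensuring that the derivative bound survives at this order. Hilb's formula with the explicit remainder (as stated, e.g., in Szegő's monograph) suffices, but one must be careful that the uniformity extends up to $r=0$, where both sides vanish to order $l$; this is visible only after factoring out the $r^l$ and rewriting the Bessel term as the analytic function $(\sqrt{\lambda_{kl}}\,r)^{-l-1/2}J_{l+1/2}(\sqrt{\lambda_{kl}}\,r)$. Once this behavior at the origin is checked, and the derivative step is executed using the Laguerre recursions above rather than term-by-term differentiation of the error, the two stated expansions follow directly.
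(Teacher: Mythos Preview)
Your approach is essentially the paper's: both rest on Hilb's asymptotic formula for the Laguerre polynomials (Szeg\H{o}, Theorem~8.22.4), after which the constant $A_{kl}$ is identified and the $O(1/k)$ error is read off from the growth $A_{kl}\sim\tfrac{2}{\sqrt\pi}k^{(l+1)/2}$. The only difference is that the paper simply asserts the Hilb expansion may be differentiated term by term, whereas you take the more explicit route through the Laguerre and Bessel recursions; either yields the stated gradient expansion.
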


In the following lemma we construct an even complex-valued solution of the Helmholtz
equation:
\[
\De\vp+\vp=0,
\]
in $\RR^3$ such that the link $L$ is a union of connected components
of its nodal set up to a diffeomorphism. The function~$\vp$ is smooth and
is conveniently given by a finite sum of spherical Bessel functions
and spherical harmonics. We observe that $\vp$ is not square-integrable but it has an optimal decay rate at infinity among all solutions to the Helmholtz equation. The proof of this lemma is presented in
Section~\ref{S.lemma2}, and exploits ideas introduced in~\cite{Annals,Adv,Acta}:

\begin{lemma}\label{L.finite}
There are finitely many complex numbers $c_{lm}$, with $0\leq l\leq
l_0$ and $-l\leq m\leq l$, such that the complex-valued function
\[
\vp:=\sum_{l=0}^{l_0}\sum_{m=-l}^l c_{lm}\, j_l(r)\, Y_{lm}(\theta,\phi)
\]
has the following properties:
\begin{enumerate}

\item The function $\vp$ is even, so $c_{lm}=0$ for all odd $l$.

\item There is a diffeomorphism $\Phi_1$ of $\RR^3$ such that 
  $\Phi_1(L)$ is a union of connected components of the zero set
  $\vp^{-1}(0)$.

\item $\Phi_1(L)$ is structurally stable. More precisely, let $S$ be a
  compact set containing $\Phi_1(L)$. Then there is some
  $\de>0$ such that for any function $\vp'$ with
  $\|\vp-\vp'\|_{C^1(S)}<\de$ one can find a diffeomorphism $\Phi_2$
  of $\RR^3$ such that $\Phi_2\circ\Phi_1(L)$ is a collection of
  connected components of $\vp'^{-1}(0)$ that are contained in $S$.
\end{enumerate}
\end{lemma}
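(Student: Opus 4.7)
The plan is to upgrade the Helmholtz nodal-realization theorem of the type proved in~\cite{Annals,Adv,Acta} to a solution that is simultaneously even and a \emph{finite} sum of regular separable modes. Specifically, the key input is the statement that, for any finite link $L\subset\RR^3$, there exist an entire complex-valued solution $u$ of $\De u+u=0$ and a diffeomorphism $\Psi$ of $\RR^3$ such that $\Psi(L)$ is a union of \emph{transversally} stable connected components of $u^{-1}(0)$. The proof (a local realization near $L$ followed by a Lax--Malgrange/Runge approximation by global Helmholtz solutions) is the hard step and is what Section~\ref{S.lemma2} is presumably devoted to; I take it as input here.

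To enforce evenness (item~(i)), I first compose with an ambient translation so that $L$ lies in the half-space $\{x_1>1\}$; the antipodal disjoint union $L':=L\cup\sigma(L)$, with $\sigma(x):=-x$, is then a genuine link. Applying an equivariant version of the realization theorem to $L'$ — obtained by carrying out both the local model and the Runge approximation step in the category of $\ZZ_2$-invariant Helmholtz solutions — yields an even entire Helmholtz solution $u$ and a $\sigma$-equivariant diffeomorphism $\Psi$ for which $\Psi(L')$ is a union of transverse nodal components of $u^{-1}(0)$. We then focus on the copy $\Psi(L)\subset\{x_1>1\}$.

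Next I expand
\[
u(x)=\sum_{l=0}^{\infty}\sum_{m=-l}^{l} c_{lm}^{\infty}\, j_l(r)\, Y_{lm}(\theta,\phi),
\]
which for any entire solution of the Helmholtz equation converges with all derivatives uniformly on compact subsets of $\RR^3$; this is classical and follows from the Herglotz representation of entire Helmholtz solutions combined with the completeness of spherical harmonics on~$\SS^2$. Since $j_l(r)Y_{lm}(\theta,\phi)$ has parity $(-1)^l$ under $x\mapsto-x$, the evenness of~$u$ forces $c_{lm}^{\infty}=0$ for all odd~$l$. Truncating at some large~$l_0$ and setting $c_{lm}:=c_{lm}^{\infty}$ for $l\leq l_0$ defines the finite sum~$\vp$ of the lemma.

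Finally, fix a compact set $S\supset\Psi(L)$. Choosing $l_0$ large enough makes $\|\vp-u\|_{C^1(S)}$ arbitrarily small, so by the transversal structural stability of the realization for~$u$ there is a diffeomorphism $\Phi_2$ of $\RR^3$, supported in~$S$ and isotopic to the identity, such that $\Phi_2(\Psi(L))$ is a union of connected components of $\vp^{-1}(0)$ contained in~$S$. Setting $\Phi_1:=\Phi_2\circ\Psi$ yields items~(i) and~(ii), and item~(iii) is the same implicit-function / Thom-isotopy argument now applied with $\vp$ playing the role of~$u$: one just needs to check that $\nabla\vp$ does not vanish on~$\Phi_1(L)$ and trivializes its normal bundle, which is automatic because the realization for~$u$ was transverse and the $C^1$-perturbation from $u$ to $\vp$ is small on~$S$.
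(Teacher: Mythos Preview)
Your outline is correct and follows the same architecture as the paper: realize the link as a transverse nodal set of a local/global Helmholtz solution, arrange evenness, expand in spherical Bessel modes, truncate, and invoke Thom's isotopy theorem for stability. Two points are worth flagging.

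First, what you take as a black box --- the existence of an entire complex-valued Helmholtz solution $u$ with $\Psi(L)$ a union of transverse nodal components --- is precisely what the paper's Section~\ref{S.lemma2} constructs from scratch rather than quotes. The paper builds $u$ by (a) writing each component $L_\alpha$ as a transverse intersection of two analytic surfaces, (b) solving two Cauchy problems via Cauchy--Kowalewski to get a local solution $\hat\varphi$ on a tubular neighborhood $U$, (c) symmetrizing to an even local solution on $U\cup(-U)$, (d) approximating by a finite sum of translates of the Helmholtz Green's function, (e) sweeping the singularities outside a large ball by a Hahn--Banach/Runge argument (Lemma~\ref{L.approx}), and (f) expanding the resulting smooth solution in the ball and truncating. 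So the paper is self-contained where you defer to~\cite{Annals,Adv,Acta}.

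Second, your treatment of evenness via an ``equivariant version of the realization theorem'' is the one place where you are vaguer than the paper. The paper does not invoke an abstract equivariant Runge principle; instead it places $L$ in the positive octant, builds the local solution there, reflects it explicitly to $-U$, and then carries the $\ZZ_2$-symmetry through each subsequent approximation step by hand (in particular, Lemma~\ref{L.approx} produces the even approximant by averaging $\hat\varphi_2(x)$ and $\hat\varphi_2(-x)$ and using that the target set $S$ is symmetric). Your approach would work, but the paper's concrete symmetrization avoids having to check that the Runge-type argument respects the involution. Finally, your appeal to the Herglotz representation for $C^1$-convergence of the truncated series is unnecessary: the paper simply uses $L^2$-convergence of the Fourier--Bessel series in a ball together with interior elliptic estimates to upgrade to $C^1$, which works for any Helmholtz solution in $B$ without growth assumptions.
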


Let us take a large
integer $\hk$ that will be fixed later, and which we assume to be
larger than $\tfrac{l_0}2$. For each even integer $l$ smaller than $2\hk$ we set 
\begin{equation}\label{kl}
\hk_l:=\hk-\frac l2\,,
\end{equation}
so that the eigenvalue 
\begin{equation}\label{lahk}
\la:=\la_{\hk_ll}=4\hk+3
\end{equation}
does not depend on the choice of $l$. The desired eigenfunction $\psi$ of the harmonic oscillator can then be 
derived from the function~$\vp$ constructed in Lemma~\ref{L.finite} by
setting
\[
\psi:=\sum_{l=0}^{l_0}\sum_{m=-l}^l \frac{c_{lm}}{A_{\hk_l l}}\, \psi_{\hk_l lm}
\] 
for a large enough number $\hk$. Notice that, by construction, $\psi$
is a smooth complex-valued function in $L^2(\RR^3)$ that satisfies the
Eq.~\eqref{harm} with $\la$ as in~\eqref{lahk}. Here we have used that
$c_{lm}=0$ for odd $l$, since the number $\hk_l$ defined in~\eqref{kl}
is an integer only for even~$l$.

Let us fix some $R>0$ such that the ball centered at the origin and of
radius $R$, which we will denote by $B$, contains the link
$\Phi_1(L)$. We claim that for any $\de>0$ one can choose $\hk$ large
enough so that
\begin{equation}\label{C1}
\bigg\|\psi\bigg(\frac\cdot{\sqrt\la}\bigg)-\vp(\cdot)\bigg\|_{C^1(B)}<\de\,.
\end{equation}
This is a rather straightforward consequence of
Lemma~\ref{L.asymp}. Indeed, using this lemma, an elementary computation shows that
\begin{align*}
\psi_{klm}\bigg(\frac x{\sqrt{\la_{kl}}}\bigg) & =A_{kl}\,\bigg[ j_l(r)\,
Y_{lm}(\te,\phi)+O\bigg(\frac1k\bigg)\bigg]\,,\\[1mm]
\nabla_x\psi_{klm}\bigg(\frac x{\sqrt{\la_{kl}}}\bigg) & = A_{kl}\,\bigg[
j_l'(r)\, Y_{lm}(\te,\phi)\, e_r+ \frac{j_l(r)}r\, \nabla_{\SS^2}Y_{lm}(\te,\phi)+O(\tfrac1k)\bigg]\,.
\end{align*}
Hence, substituting these asymptotic expressions in the sum for~$\psi$
we find
\begin{align*}
\bigg|\psi\bigg(\frac x{\sqrt\la}\bigg)-\vp(x)\bigg| 
&\leq
\sum_{l=0}^{l_0}\sum_{m=-l}^l {c_{lm}}\,
\bigg|\frac1{A_{\hk_l l}}\psi_{\hk_l lm}\bigg(\frac x{\sqrt\la}\bigg) -
j_l(r)\, Y_{lm}(\theta,\phi)\bigg|\\
&=\sum_{l=0}^{l_0}\sum_{m=-l}^l {c_{lm}} O(\tfrac1{\hk_l})\leq \frac
C{\hk-\tfrac{l_0}2} \leq \frac C{\hk}
\end{align*}
provided $\hk$ is much larger than $\tfrac{l_0}2$ and $|x|<R$. An analogous 
argument shows
\[
\bigg|\nabla_x\psi\bigg(\frac
x{\sqrt\la}\bigg)-\nabla_x\vp(x)\bigg|\leq\frac C{\hk}\,,
\]
so the estimate~\eqref{C1} follows provided $\hk$ is large enough.

Item~(iii) in Lemma~\ref{L.finite} ensures that, if $\de$ is small enough, the
function $\psi(\cdot/\sqrt\la)$ has a collection of connected
components in its nodal set $\{\psi(\cdot/\sqrt\la)=0\}$ given by the
link $\Phi_2\circ \Phi_1(L)$, where $\Phi_2$ is a diffeomorphism of
$\RR^3$ and $\Phi_2\circ \Phi_1(L)$ is contained in $B$. Item~(iii)
also ensures that the link $\Phi_2\circ \Phi_1(L)$ is structurally stable for the eigenfunction $\psi$. This implies
that the rescaled link $\Phi_3\circ\Phi_2\circ\Phi_1(L)$ is a union of
structurally stable connected components of $\psi^{-1}(0)$, where $\Phi_3$ denotes the
diffeomorphism of $\RR^3$ given by the rescaling
\[
\Phi_3(x):=\frac{x}{\sqrt\la}\,.
\]
The theorem then follows by setting $\Phi:= \Phi_3\circ\Phi_2\circ\Phi_1$.

\begin{remark}
It is worth noting that the fact that the function~$\vp$ is even was 
key to constructing the radial quantum number $k_l$ via Eq.~\eqref{kl}. A 
straightforward modification of the argument enables us to
consider the case where $\vp$ is odd. As is well known, all
eigenfunctions of the harmonic oscillator must have a definite parity.
In particular the nodal set of the eigenfunction~$\psi$ contains
(at least) two copies of the link $\Phi(L)$ as the link $\Phi_1(L)$
constructed in the proof of Lemma~\ref{L.finite} is contained in the
positive octant of $\RR^3$, which implies that so is $\Phi(L)$. Moreover, $\Phi(L)$ is contained in a small ball of radius $R\la^{-1/2}$.  
\end{remark}

\section{Proof of Lemma~\ref{L.asymp}}
\label{S.lemma1}

The lemma essentially follows from Hilb's asymptotic formula for the Laguerre
polynomials~\cite[Theorem 8.22.4]{Szego75}:
\begin{align*}
 \e^{-\frac{r^2}2}r^l L_k^{l+\frac12}(r^2)&=A_{kl}\,
 j_l(\sqrt{\la_{kl}} r)+ O(k^{\frac {l-1}2})\,,\\
\frac{d}{dr}\big[\e^{-\frac{r^2}2}r^l L_k^{l+\frac12}(r^2)\big]&=\sqrt{\la_{kl}} \,\big[A_{kl}\,
 j_l'(\sqrt{\la_{kl}} r)+ O(k^{\frac {l-1}2})\big]\,,
\end{align*}
with
\begin{align*}
A_{kl}&:=
\frac2{\sqrt\pi}\bigg(\frac{\sqrt{\la_{kl}} }2\bigg)^{-l}\frac{\Gamma(k+l+\frac32)}{k!}\,.
\end{align*}
This formula holds uniformly for $r\leq R$. (In fact, the formula for the derivative does not appear in the above
reference, but it is standard ---and easy to prove--- that this
asymptotic formula can be derived term by term). 

The asymptotic expansion
for~$\psi_{klm}$ written in the lemma follows from the
identity~\eqref{psiklm} and the fact that the constant $A_{kl}$ can be
estimated for large~$k$ as
\begin{align*}
A_{kl}&=\frac2{\sqrt\pi}k^{\frac{l+1}2}+ O(k^{\frac{l-1}2})\,.
\end{align*}
This is an elementary computation using Stirling's formula for the
factorial and the identity
\[
\Gamma(k+l+\tfrac32)=\frac{\sqrt\pi\,  (2k+2l+2)!}{2^{2k+2l+2}\,(k+l+1)!}\,.
\]

\section{Proof of Lemma~\ref{L.finite}}
\label{S.lemma2}

Let $B$ be a ball centered at the origin that contains the link $L$. There is no loss of generality in assuming that $L$ is contained in the positive octant of $\RR^3$, that is,
$$L\subset B\cap \{x_1>0, x_2>0, x_3>0\}\,.$$
An easy application of Whitney's approximation theorem ensures that, by perturbing the link a little if necessary, we can assume that it is a real analytic submanifold of $\RR^3$.

Let us denote by $L_\al$ the connected components of $L$, with the
index $\al$ taking values in a finite set $A$. Each component
$L_\al$ is an analytic closed curve without self-intersections. Our
next goal is to write the curve $L_\al$ as the transverse intersection
of two surfaces $\Si_\al^1$ and $\Si_\al^2$.

Since
any closed curve in $\RR^3$ has
trivial normal bundle~\cite{Ma59}, there exists an analytic
submersion $\Theta_\al:W_\al\to\RR^2$, where $W_\al$ is a tubular
neighborhood of $L_\al$ and $\Theta_\al^{-1}(0)=L_\al$. We can then take the
analytic surfaces
$\Si^1_\al:=\Theta_\al^{-1}((-1,1)\times\{0\})\subset W_\al$ and
$\Si^2_\al:=\Theta_\al^{-1}(\{0\}\times(-1,1))\subset W_\al$. Since
$\Theta_\al$ is a submersion, these surfaces
intersect transversally at $L_\al=\Si^1_\al\cap\Si^2_\al$.

Now that we have expressed the component $L_\al$ as the intersection
of two real analytic surfaces $\Si^1_{a}$ and $\Si^2_\al$, we can
consider the following Cauchy problems, with $j=1,2$:
\begin{equation*}
\Delta u_\al^j+u_\al^j=0\,,\qquad u_\al^j|_{\Si^j_\al}=0\,,\qquad \pd_\nu u_\al^j|_{\Si^j_{\al}}=1\,.
\end{equation*}
Here $\pd_\nu$ denotes a normal derivative at the corresponding
surface. The Cauchy--Kowalewski theorem then grants the existence of
solutions $u_\al^j$ to this Cauchy problem in the closure of small
neighborhoods $U^j_{\al}$ of each surface $\Si^j_{\al}$. We can safely
assume that the tubular neighborhoods $U^1_\al\cap U^2_\al$ are small
enough so that the neighborhoods corresponding to distinct components
are disjoint. Now we take the union of these pairwise disjoint tubular neighborhoods, 
\[
U:=\bigcup_{\al\in A}(U^1_\al\cap U^2_\al)\,,
\]
and define a
complex-valued function $\hat\varphi$ on the set $U$ as
$$\hat\varphi|_{U^1_{\al}\cap U^2_\al}:=u_\al^1+iu_\al^2\,.$$

The following properties of $\hat\varphi$ are clear from the construction:
\begin{enumerate}
\item $\hat\varphi$ satisfies the equation
$$\Delta\hat \vp+\hat\varphi=0$$
in the tubular neighborhood $U$ of the link $L$. We can assume without
loss of generality that $U$ is contained in the positive octant $B\cap
\{x_1>0, x_2>0, x_3>0\}$, as is the link $L$.
\item $U$ can be taken small enough so that the nodal set of $\hat\varphi$ is precisely $L$, i.e., $L=\hat\varphi^{-1}(0)$.
\item The intersection of the zero sets of the real and imaginary parts of $\hat\varphi$ on $L$ is transverse, i.e.,
\begin{equation}\label{trans}
\rank(\nabla \Real \hat\varphi(x),\nabla \Imag \hat\varphi(x))=2
\end{equation}
for all $x\in L$.
\end{enumerate}

Let us agree to say that a subset of $\RR^3$ is {\em symmetric}\/ if
it is invariant under the inversion $x\mapsto -x$, and denote by $-U$
the image of the set $U$ under this map. Since $U$ is contained in the
positive octant, $U\cap -U=\emptyset$. 

Let us then define an even
function~$\vp'$ in the symmetric set
\[
U':=U\cup -U,
\]
as
\[
\varphi'(x):=\begin{cases}
\hat\varphi(x)& \text{if }x\in U\,,\\
\hat\varphi(-x) & \text{if }x\in - U\,.
\end{cases}
\] 
By construction, $\vp'$ satisfies the Helmholtz equation 
\begin{equation}\label{eqrol}
\Delta\vp'+\varphi'=0
\end{equation} 
in $U'$ and its nodal set consists of $L$ and its mirror image under the inversion $x\mapsto-x$. 

Denote by $S$ a symmetric closed subset of $U'$ whose interior
contains the link $L$. Our next goal is to construct a solution of the
Helmholtz equation in $\RR^3$ that approximates the local solution $\varphi'$ in the set $S$. To this end, let us take a smooth even function $\chi:\RR^3\to\RR$ equal to $1$ in a neighborhood of $S$ and identically zero outside $U'$, and define
a smooth extension $\vp_0$ of the function $\varphi'$ to $\RR^3$ by
setting $\vp_0:=\chi \varphi'$, which is an even function too. Denote by
\[
G(x):=\frac{\cos|x|}{4\pi|x|}
\]
the Green's function of the operator $\De+1$ in $\RR^3$, which
satisfies the distributional equation
\[
\De G+G=-\de_0
\]
with $\de_0$ the Dirac measure supported at $0$. Since $\vp_0$ is
compactly supported, we obviously have
\begin{equation}\label{intbv}
\vp_0(x)=\int_{\RR^3} G(x-x')\, \rho(x')\, dx'
\end{equation}
with $\rho:=-\De\vp_0-\vp_0$. The complex-valued function $\rho$ is
even and its support is contained in the set $U'\backslash S$. Therefore, an easy continuity argument ensures that one
can approximate the integral~\eqref{intbv} uniformly in the compact set~$S$ by a finite
Riemann sum of the form
\begin{equation}\label{vp1}
\varphi_1(x):=\sum_{j=-J}^J \rho_j\, G(x-x_j)\,.
\end{equation}
Specifically, for any $\de>0$ there is a large integer $J$, complex
numbers $\rho_j$ and points $x_j\in U'\backslash S$ such that the
finite sum~\eqref{vp1} satisfies
\begin{equation}\label{est1}
\|\varphi_1-\varphi'\|_{C^0(S)}<\de\,.
\end{equation}
By the symmetry of the integrand, these quantities can be chosen such
that $\rho_0=0$, $\rho_{-j}=\rho_j$ and $x_{-j}=-x_j$ for $j>0$, thus
guaranteeing that $\varphi_1$ is an even function. Here we have used
that $\vp_0=\vp'$ in $S$.

In the following lemma we show how to ``sweep'' the singularities of
the function $\varphi_1$ in order to approximate it in the set $S$ by another function $\varphi_2$ whose singularities are contained in the complement of the ball $B$. The proof is based on a duality argument and the
Hahn--Banach theorem.

\begin{lemma}\label{L.approx}
For any $\de>0$, there is a finite set of points
$\{z_j\}_{j=-J'}^{J'}$ in $\RR^3\backslash \overline B$ and complex numbers $c_j$ such that the
finite linear combination
\begin{equation}\label{eqmwx}
\varphi_2(x):=\sum_{j=-J'}^{J'} c_j\, G(x-z_j)
\end{equation}
approximates the function $\varphi_1$ uniformly in $S$:
\begin{equation}\label{GtG}
\|\varphi_2-\varphi_1\|_{C^0(S)}<\de\,.
\end{equation}
Moreover,
\begin{equation*}\label{even}
c_0=0\,,\qquad z_{-j}=-z_j\,, \qquad c_{-j}=c_j, 
\end{equation*} 
for all $j>0$, so that $\varphi_2$ is an even function.
\end{lemma}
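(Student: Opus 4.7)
The plan is a Runge-type duality argument based on the Hahn--Banach theorem, as the hint suggests. I work in the Banach space $C^0(S,\CC)$ and let $\cW\subset C^0(S,\CC)$ denote the closed linear span of the symmetrised functions $G(\cdot-z)+G(\cdot+z)$ with $z\in\RR^3\setminus\overline B$. Every candidate $\vp_2$ of the form \eqref{eqmwx} that respects the prescribed symmetries is a finite linear combination of such generators, so the lemma reduces to showing $\vp_1\in\cW$. By Hahn--Banach together with the Riesz representation of $C^0(S)^*$, this is in turn equivalent to proving that every complex Radon measure $\mu$ on $S$ satisfying
\begin{equation*}
\int_S\bigl[G(x-z)+G(x+z)\bigr]\,d\mu(x)=0\quad\text{for all } z\in\RR^3\setminus\overline B
\end{equation*}
also satisfies $\int_S\vp_1\,d\mu=0$.

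Fix such a measure $\mu$ and set $u(z):=\int_S G(z-x)\,d\mu(x)$ and $v(z):=u(z)+u(-z)$ for $z\in\RR^3\setminus S$. Since $G$ is radial, $v(z)=\int_S[G(z-x)+G(z+x)]\,d\mu(x)$, so the annihilation hypothesis is exactly $v\equiv 0$ on $\RR^3\setminus\overline B$. Differentiating under the integral sign (which is legitimate because $\supp\mu\subset S$ and $G$ is smooth away from the origin) and using $(\De+1)G=-\de_0$, I see that $v$ is a classical solution of the Helmholtz equation $\De v+v=0$ in $\RR^3\setminus S$. As this is an elliptic equation with constant coefficients, $v$ is real analytic in its domain of definition. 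Provided $S$ is chosen so that $\RR^3\setminus S$ is connected---which we may arrange without loss of generality, since $S$ can be taken as a thin symmetric closed neighbourhood of the link $L\cup(-L)$ sitting inside $U'$---the identity principle forces $v\equiv 0$ on the whole of $\RR^3\setminus S$. In particular $u(x_j)+u(-x_j)=0$ for each point $x_j\in U'\setminus S$ appearing in \eqref{vp1}.

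To conclude, I exploit the evenness of $\vp_1$: since $\rho_0=0$, $\rho_{-j}=\rho_j$ and $x_{-j}=-x_j$, pairing opposite indices in the sum yields
\begin{equation*}
\int_S\vp_1\,d\mu\;=\;\sum_{j=-J}^{J}\rho_j\,u(x_j)\;=\;\sum_{j=1}^{J}\rho_j\bigl[u(x_j)+u(-x_j)\bigr]\;=\;0,
\end{equation*}
as required. Hence $\vp_1\in\cW$, and any finite linear combination approximating $\vp_1$ within $\de$ in $C^0(S)$ gives the desired $\vp_2$; the symmetric parametrisation of $\cW$ in terms of the generators $G(\cdot-z)+G(\cdot+z)$ automatically produces coefficients and nodes with $z_{-j}=-z_j$, $c_{-j}=c_j$ and $c_0=0$.

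The principal point to verify carefully is the applicability of the identity principle: one needs the complement $\RR^3\setminus S$ to be connected, and $v$ to be a bona fide classical solution of Helmholtz away from $\supp\mu\subset S$. Both are standard but essential---the connectedness holds because a sufficiently small closed tubular neighbourhood of a tame link does not disconnect $\RR^3$, and the regularity of $v$ off $S$ follows because the kernel $G$ is real analytic away from the origin, so all derivatives can be passed inside the integral defining $u$ whenever $z\notin S$.
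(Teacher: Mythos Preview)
Your argument is correct and follows the same Hahn--Banach/Riesz--Markov duality scheme as the paper: one forms the single-layer potential of the annihilating measure, observes that it satisfies the Helmholtz equation off~$S$ and vanishes outside~$\overline B$, and then propagates this vanishing through the connected set $\RR^3\setminus S$ by analyticity to conclude that $\mu$ kills $\varphi_1$.

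The only difference lies in where the evenness is imposed. The paper works in the \emph{unsymmetrised} span of translates $G(\cdot-z)$, obtains a (not necessarily even) approximant $\hat\varphi_2$, and then averages $\hat\varphi_2(x)$ with $\hat\varphi_2(-x)$ afterward, using the symmetry of~$S$ and the evenness of~$\varphi_1$ to control the error. You instead build the symmetry into the approximating space~$\cW$ from the outset and use the pairing $\rho_{-j}=\rho_j$, $x_{-j}=-x_j$ to show directly that $\int_S\varphi_1\,d\mu=\sum_{j>0}\rho_j\,v(x_j)=0$. The two routes are equivalent reformulations; yours is slightly more economical in that it avoids the intermediate approximant~$\hat\varphi_2$, while the paper's version makes the symmetrisation step explicit and thus separates the Runge approximation from the parity requirement.
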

\begin{proof}
Consider the space $\cU$ of all complex-valued functions that are linear
combinations of the form~\eqref{eqmwx}, not necessarily even, where
$z_j$ can be any point in $\RR^3\backslash \overline B$ and the 
constants $c_j$ take arbitrary complex values. Restricting these functions to the set $S$, $\cU$ can
be regarded as a subspace of the Banach space $C^0(S)$ of continuous
complex-valued functions on $S$.

By the Riesz--Markov theorem, the dual of $C^0(S)$ is  the space
$\cM(S)$ of the finite complex-valued Borel measures on $\RR^3$ whose support
is contained in the set~$S$. Let us take any measure
$\mu\in\cM(S)$ such that $\int_{\RR^3} fd\mu=0$ for all
$f\in \cU$. Let us now define a complex-valued function $F\in L^1\loc(\RR^3)$ as
\[
F(x):=\int_{\RR^3} G(\tilde x - x)\,d\mu(\tilde x)\,, 
\]
so that $F$ satisfies the equation 
\[
\De F+F=-\mu\,.
\]
Notice that $F$ is identically zero on $\RR^3\backslash \overline B$ by the definition of the
measure~$\mu$ and that $F$
satisfies the elliptic
equation
\[
\De F+F=0
\]
in $\RR^3\minus S$, so $F$ is analytic in this set. Hence, since
$\RR^3\minus S$ is connected and contains the set $\RR^3\minus B$, by
analyticity the function $F$ must vanish on the complement of $S$. It
then follows that the measure $\mu$ also annihilates any complex-valued function of the
form $\rho_j\,G(x-x_j)$
because, as the points $x_j$ do not belong to $S$,
\[
0=\rho_jF(x_j)=\int_{\RR^3} \rho_jG(x-x_j)\,d\mu(x)\,.
\]
Therefore 
\[
\int_{\RR^3}\varphi_1\,d\mu=0\,,
\]
which implies that $\varphi_1$ can be
uniformly approximated on~$S$ by elements of the subspace $\cU$, due to a 
consequence of the Hahn--Banach theorem. Accordingly, there is a finite set of points
$\{z_j\}_{j=1}^{J'}$ in $\RR^3\backslash \overline B$ and complex numbers $c_j$ such that the function 
\begin{equation*}
\hat\varphi_2(x):=\sum_{j=1}^{J'} 2c_j\, G(x-z_j)
\end{equation*}
approximates the function $\varphi_1$ uniformly in $S$:
\begin{equation*}
\|\hat\varphi_2-\varphi_1\|_{C^0(S)}<\de\,.
\end{equation*}
The lemma then follows by setting 
$$
\varphi_2(x):=\sum_{j=1}^{J'} c_j\, G(x-z_j)+\sum_{j=1}^{J'} c_j\, G(x+z_j)=:\sum_{j=-J'}^{J'} c_j\, G(x-z_j)
$$
where $c_0=0$, $c_{-j}=c_j$ and $z_{-j}=z_j$. Indeed, since $S$ is a symmetric set, we have that, for all $x\in S$,
$$
\varphi_2(x)-\varphi_1(x)=\varphi_2(x)-\frac{\varphi_1(x)+\varphi_1(-x)}{2}=\frac{\hat\varphi_2(x)-\varphi_1(x)}{2}+
\frac{\hat\varphi_2(-x)-\varphi_1(-x)}{2}\,,
$$
which implies the desired estimate
$$
\|\varphi_2-\varphi_1\|_{C^0(S)}\leq \frac12\|\hat\varphi_2-\varphi_1\|_{C^0(S)}+\frac12\|\hat\varphi_2-\varphi_1\|_{C^0(S)}<\delta\,.
$$
Notice that we have used the identity $\varphi_1(x)=\varphi_1(-x)$
\end{proof}

To complete the proof of the lemma, notice that the even complex-valued
function $\varphi_2$ constructed in Lemma~\ref{L.approx} satisfies 
\begin{equation}\label{eqwmi}
\De \varphi_2+\varphi_2=0
\end{equation}
in the ball $B$, whose interior contains $S$. Let us take spherical
coordinates $(r,\te,\vp)$ in the ball $B$. Expanding the function
$\varphi_2$ (with respect to the angular variables) in a series of
spherical harmonics and using Eq.~\eqref{eqwmi}, we immediately obtain
that $\varphi_2$ can be written in the ball as a Fourier--Bessel series of the form
\[
\varphi_2=\sum_{l=0}^\infty\sum_{m=-l}^l c_{lm}\, j_l(r)\, Y_{lm}(\te,\vp)\,.
\]
Since $\varphi_2$ is even, we have that $c_{lm}=0$ for all odd $l$. As
before, $j_l$ denotes a spherical Bessel function. 

Since the above series converges in $L^2(B)$, for any $\de>0$ there is an integer $l_0$
such that the finite sum
\[
\varphi:=\sum_{l=0}^{l_0}\sum_{m=-l}^l  c_{lm}\, j_l(r)\, Y_{lm}(\te,\vp)
\]
approximates the function $\varphi_2$ in an $L^2$ sense:
\begin{equation}\label{huwL2}
\|\varphi-\varphi_2\|_{L^2(B)}<\de\,.
\end{equation}
By the properties of spherical Bessel functions, the complex-valued function $\varphi$ is smooth in $\RR^3$ and satisfies the equation 
\begin{equation}\label{eqmhu}
\De\varphi+\varphi=0
\end{equation}
in the whole space. 

Given any smaller ball $B'$, properly contained in $B$ and in turn containing
the set $S$, standard elliptic
estimates allow us to pass from the $L^2$ bound~\eqref{huwL2} to a
uniform estimate
\[
\|\varphi-\varphi_2\|_{C^0(B')}<C\delta\,.
\]
From this inequality and the bounds~\eqref{est1} and~\eqref{GtG} we infer
\begin{equation}\label{boundmi}
\|\varphi-\varphi'\|_{C^0(S)}<C\de\,.
\end{equation} 
Moreover, since $\varphi'$ also satisfies the Helmholtz equation in a
neighborhood of
the compact set $S$ (cf.\ Eq.~\eqref{eqrol}), standard elliptic estimates again imply that
the uniform estimate~\eqref{boundmi} can be promoted to the $C^1$
bound
\begin{equation}\label{lastbound}
\|\varphi-\varphi'\|_{C^1(S)}<C\de\,.
\end{equation}

Finally, since the link $L$ is a union of components of the the nodal
set of $\varphi'$  and satisfies the transversality
condition~\eqref{trans}, the estimate~\eqref{lastbound} and a direct
application of Thom's isotopy theorem~\cite[Theorem 20.2]{AR} imply
that there is a diffeomorphism $\Phi_1$ of $\RR^3$ such that
$\Phi_1(L)$ is a union of components of the zero set
$\vp^{-1}(0)$. Moreover, the diffeomorphism $\Phi_1$ is $C^1$-close to
the identity and different from the identity just in a small
neighborhood of $L$, so we can safely assume that $\Phi_1(L)$ is contained in
$B$. The structural stability of the link $\Phi_1(L)$ for the function
$\varphi$ also follows from Thom's isotopy theorem and the fact that
$\vp$ satisfies the transversality condition 
$$
\rank(\nabla \Real \varphi(x),\nabla \Imag \varphi(x))=2
$$ 
for all $x\in\Phi_1(L)$. This last equation is a consequence of the
$C^1$-estimate~\eqref{lastbound}, the fact that the function~$\vp'$
satisfies the transversality estimate~\eqref{trans} by definition, and
the fact that transversality is an open property under $C^1$-small
perturbations. The lemma then follows.

\section{A remark about the higher dimensional counterpart}
\label{S.remark}

Following Berry, we have considered the construction of a complex-valued eigenfunction
(or two real-valued eigenfunctions) of the harmonic oscillator in 
three dimensions with a prescribed nodal set of codimension $2$ (that
is, a link). It is worth mentioning that essentially the same argument
enables us to construct $n$ eigenfunctions of the harmonic oscillator
in $\RR^d$ with a prescribed nodal set of codimension $n$.

However, a technical condition makes the statement
considerably more involved in the general case. This condition is
associated with the requirement that the level set be structurally
stable. In the situation covered by the main theorem, the structural
stability follows from the important equation~\eqref{trans}, which
plays a crucial role in the proof. The higher dimensional analog of that
relation would then be the requirement that
\begin{equation}\label{rank}
\rank(\nabla\psi_1(x),\dots,\nabla\psi_n(x))=n, 
\end{equation}
for all $x$ in the prescribed codimension-$n$ nodal set in $\RR^d$, where $\psi_1,\dots,\psi_n$
would be real-valued eigenfunctions of the harmonic oscillator. For
this condition to hold, a topological obstruction is that the normal
bundle of the set~$L$ that we want to prescribe in the nodal set must be
trivial. Geometrically, this is equivalent to the assertion that a
small tubular neighborhood of the submanifold~$L$ must be
diffeomorphic to $L\times\RR^n$. 

Hence we are led to the following result. Since a link always has
trivial normal bundle~\cite{Ma59}, the main theorem corresponds
exactly to the case $d=3$ and $n=2$.

\begin{theorem}\label{T.high}
Let $L$ be a finite disjoint union of codimension-$n$ compact
submanifolds of $\RR^d$ with trivial normal bundle,
and $d\geq3$. If $n=1$, we also assume that $L$ is connected. Then for any large 
enough eigenvalue $\la$ of the harmonic oscillator in $\RR^d$ there
are $n$ real-valued eigenfunctions $\psi_1,\dots,\psi_n$ with
eigenvalue $\la$ and a diffeomorphism $\Phi$ of $\RR^d$ such that
$\Phi(L)$ is the union of connected components of the joint nodal set
$\psi_1^{-1}(0)\cap\cdots\cap\psi_n^{-1}(0)$. Furthermore, $\Phi(L)$ is structurally stable.
\end{theorem}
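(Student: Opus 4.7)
The plan is to follow the three-step scheme used for Theorem~\ref{T.main} — local Helmholtz data, Runge-type globalization, then transfer to the harmonic oscillator via Hilb's asymptotics — but now working with $n$ real-valued functions in $\RR^d$ in place of the real and imaginary parts of a single complex-valued function in $\RR^3$. The hypothesis that the normal bundle of $L$ be trivial is exactly what makes the first step go through in this generality.

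First I would carry out the local construction. For each component $L_\al$ of $L$, triviality of the normal bundle supplies a real-analytic submersion $\Theta_\al\colon W_\al\to\RR^n$ on a tubular neighborhood with $\Theta_\al^{-1}(0)=L_\al$, and the preimages $\Sigma^j_\al:=\Theta_\al^{-1}(\{x_j=0\})$ are $n$ analytic hypersurfaces whose transverse intersection is $L_\al$. Solving by Cauchy--Kowalewski the Cauchy problems
\[
\De u^j_\al+u^j_\al=0,\qquad u^j_\al|_{\Sigma^j_\al}=0,\qquad \pa_\nu u^j_\al|_{\Sigma^j_\al}=1
\]
for $j=1,\dots,n$, I would obtain real-analytic local Helmholtz solutions $u^1,\dots,u^n$ on a neighborhood of $L$ whose joint zero set equals $L$ and which satisfy the rank condition~\eqref{rank} there. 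For $n=1$ the assumed connectedness of $L=\Sigma^1$ is needed to fix a globally consistent transverse normal. After translating $L$ into a positive octant and taking suitable reflections, the $u^j$ can be arranged to share a definite parity.

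Next I would repeat the globalization of Lemma~\ref{L.finite} componentwise in $\RR^d$. The fundamental solution of $\De+1$ in $\RR^d$ still enjoys the analyticity and unique-continuation properties on which the Hahn--Banach duality argument of Lemma~\ref{L.approx} relies, so each $u^j$ is approximated uniformly on a symmetric neighborhood $S$ of $L$ by a finite linear combination of fundamental solutions centered outside a ball $B$. A $d$-dimensional Fourier--Bessel expansion on $B$ followed by truncation produces global smooth solutions
\[
\vp_j=\sum_{l=0}^{l_0}\sum_{m} c^j_{lm}\,j_{l,d}(r)\,Y_{lm}
\]
of $\De\vp_j+\vp_j=0$ in $\RR^d$, with $\|\vp_j-u^j\|_{C^1(S)}<\de$ and only $l$'s of one fixed parity occurring. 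The $d$-dimensional harmonic oscillator admits a factorized eigenbasis $\psi_{klm}$ with eigenvalue $\la_{kl}=4k+2l+d$, and Hilb's asymptotic formula carries over verbatim, so Lemma~\ref{L.asymp} holds in $\RR^d$. Taking $\hk$ large and setting $\hk_l:=\hk-l/2$ (an integer for every even $l$), the functions
\[
\psi_j:=\sum_{l,m}\frac{c^j_{lm}}{A_{\hk_l l}}\,\psi_{\hk_l l m}
\]
are real-valued eigenfunctions of common eigenvalue $\la=4\hk+d$ satisfying $\|\psi_j(\cdot/\sqrt\la)-\vp_j\|_{C^1(B)}=O(\hk^{-1})$. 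Since transversality is $C^1$-open, Thom's isotopy theorem then delivers the diffeomorphism $\Phi$ of $\RR^d$ and the claimed structural stability.

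The principal obstacle I expect is the parity bookkeeping: in three dimensions this was hidden in the remark that both the real and imaginary parts of a single even complex function are automatically even, whereas here one must arrange that $n$ simultaneous real solutions can be given a common parity compatible with the integrality of $\hk_l$ for every $l$ appearing in the Fourier--Bessel expansions. Everything else (Hilb's formula, the Runge/duality argument, elliptic regularity, Thom's theorem) is dimension-independent; the only topological input is the triviality of the normal bundle, used precisely in the first step to realize $L$ as a transverse intersection of $n$ analytic hypersurfaces.
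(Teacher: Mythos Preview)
For $n\geq2$ your outline matches the paper's: local Cauchy--Kowalewski on the hypersurfaces $\Sigma^j_\alpha$, the duality sweep, Fourier--Bessel truncation, and transfer via Hilb's asymptotics all carry over to $\RR^d$ as you describe. The parity issue you flag as the principal obstacle is not one --- symmetrizing all $n$ local solutions to be even forces only even $l$ to appear in every expansion, so $\hk_l=\hk-l/2$ is always an integer.

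The genuine gap is the case $n=1$, and your diagnosis of why connectedness is assumed is wrong. The problem is not fixing a consistent normal (a compact hypersurface in $\RR^d$ is automatically orientable); it is that Cauchy--Kowalewski with data on $L$ produces a solution $u$ only on a thin tubular shell $U$ around $L$, and then $\RR^d\setminus S$ is \emph{disconnected} for any symmetric compact $S\subset U'$ containing $L$. The Hahn--Banach argument of Lemma~\ref{L.approx} needs $\RR^d\setminus S$ connected in order to propagate the vanishing of $F$ by unique continuation from $\RR^d\setminus\overline B$ to all of $\RR^d\setminus S$; with a bounded component trapped inside the shell this step fails and the singularities cannot be swept outside $B$. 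The paper's remedy is to abandon Cauchy--Kowalewski for $n=1$: take $U$ to be the precompact domain \emph{bounded} by $L$ (this is where connectedness of $L$ is actually used), rescale so that its first Dirichlet eigenvalue is $1$, and let $u$ be the first Dirichlet eigenfunction. Then $u$ is a Helmholtz solution on a solid region whose complement is connected, it vanishes exactly on (the rescaled) $L$ with $\nabla u\neq0$ there by Hopf's lemma, and the rest of the argument goes through unchanged.
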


We note that when $d\geq3$ the eigenfunctions of the harmonic oscillator are given by
\[
\psi_{klm}=\e^{-\frac{r^2}2}r^lL_k^{l+\frac{d-2}2}(r^2)Y_{lm}(\om)\,,
\]
and the corresponding eigenvalues are $\la_{kl}=4k+2l+d$. Here
$\om:=x/r$ is a point in $\SS^{d-1}$ and
$Y_{lm}$ are the spherical harmonics in $\SS^{d-1}$ with frequency $l(l+d-2)$, with $l\geq0$,
$m\equiv(m_1,\dots,m_{d-2})$ and
\[
|m_1|\leq m_2\leq \cdots\leq m_{d-2}\leq l\,.
\]
Using again Hilb's asymptotic formula for the Laguerre polynomials~\cite[Theorem 8.22.4]{Szego75}, we get the asymptotic expressions of Lemma~\ref{L.asymp} generalized to any $d\geq3$:
\begin{align*}
\psi_{klm}(x)&=A_{kl}^d\, \big[j_l^{d}(\sqrt{\la_{kl}}\, r) +O(k^{-\min(\frac{d+1}4,2)})\big]\, Y_{lm}(\om)\,,\\
\nabla \psi_{klm}(x)&=\sqrt{\la_{kl}}\,
A_{kl}^d\,\big[(j_l^d)'(\sqrt{\la_{kl}}\, r)+ O(k^{-\min(\frac{d+1}4,2)})\big]\, Y_{lm}(\theta,\phi)\, e_r\\&+ A_{kl}^d\,\big[j_l^d(\sqrt{\la_{kl}}\, r)+O(k^{-\min(\frac{d+1}4,2)})\big]\,\frac{\nabla_{\SS^{d-1}}Y_{lm}(\om)}{r}\,,
\end{align*}
where $j_l^d$ denote the hyperspherical Bessel functions, which satisfy the radial Helmholtz equation in $\RR^d$, and the constants $A_{kl}^d$ are given by
\[
A_{kl}^d:=
\frac1{\Gamma(\frac d2)}\bigg(\frac{\sqrt{\la_{kl}} }2\bigg)^{-l}\frac{\Gamma(k+l+\frac d2)}{k!}\,.
\]

The proof of Lemma~\ref{L.finite} goes exactly as in Section~\ref{S.lemma2} when the codimension is $n\geq2$. In this case, we use the Cauchy--Kolaweski theorem with data on
hypersurfaces $\Si_\al^j$ (${1\leq j\leq n}$), intersecting at a given
component $L_\al$ transversally, to define functions $u_a^j$. When $n=1$, however, the construction of the real-valued 
function $u$ (analogous to the functions $u_\al^j$ considered in
Section~\ref{S.lemma2}, where we are dropping the sub- and superscripts because
$L$ is now connected and $n=1$) cannot be performed using the Cauchy--Kowalweski
theorem because otherwise $u$ would be defined in a small neighborhood
$U$ of $L$. As the complement of $U$ would not be connected, the proof
of Lemma~\ref{L.approx} does not carry over to this case. To
circumvent this difficulty, we take $U$ to be the precompact domain
bounded by $L$ and define $u$ in $s\, U$ as the first Dirichlet
eigenfunction of the Laplacian in this domain, where $s$ is a scale
factor chosen so that the first Dirichlet eigenvalue of $s\, U$
is~1. The details are as in~\cite[Appendix~A]{Adv}. The rest of the proof remains essentially unchanged.

In particular, in three dimensions the general result yields not only Theorem~\ref{T.main}
but also the following

\begin{corollary}
Let $L$ be a compact surface in $\RR^3$. Then
for any large enough eigenvalue $\la$ of the harmonic oscillator in $\RR^3$
there is a real-valued eigenfunction $\psi$ of the harmonic oscillator 
with energy~$\la$ and a diffeomorphism $\Phi$ of $\RR^3$ such that
$\Phi(L)$ is a connected component of the nodal set
$\psi^{-1}(0)$. Furthermore, $\Phi(L)$ is structurally stable.
\end{corollary}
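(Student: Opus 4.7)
The plan is to deduce this corollary directly from Theorem~\ref{T.high} specialized to $d=3$ and $n=1$, so that the entire task reduces to verifying the two topological hypotheses required of~$L$.

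First I would take ``a compact surface in $\RR^3$'' to mean a single smooth compact connected $2$-submanifold (without boundary), which is the standard convention and immediately supplies the connectedness assumption of Theorem~\ref{T.high} in the case $n=1$. Next I would verify the triviality of the normal bundle: by the Jordan--Brouwer separation theorem, any compact embedded hypersurface $L\subset\RR^3$ separates $\RR^3$ into a bounded and an unbounded component, so $L$ bounds a compact domain and admits a globally defined unit outward normal. Since this is a nowhere-vanishing section of the rank-one normal bundle of $L$, the bundle is trivial, and the topological hypothesis of Theorem~\ref{T.high} is met.

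Once both hypotheses are in hand, I would invoke Theorem~\ref{T.high} with $d=3$ and $n=1$ to produce, for every sufficiently large eigenvalue $\la$ of the harmonic oscillator in $\RR^3$, a single real-valued eigenfunction $\psi$ of eigenvalue~$\la$ and a diffeomorphism $\Phi$ of $\RR^3$ with $\Phi(L)$ a union of connected components of $\psi^{-1}(0)$. Because $L$ (and hence $\Phi(L)$) is connected, this union consists of a single component, which is exactly the statement of the corollary. The structural stability clause is inherited verbatim from Theorem~\ref{T.high}.

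I do not expect a genuine obstacle, since the analytic work is absorbed into Theorem~\ref{T.high}. The only delicate point, and the one I would flag, is the reason the connectedness and closed-hypersurface hypotheses are essential: the $n=1$ construction of the local solution cannot be carried out by the Cauchy--Kowalewski argument of Section~\ref{S.lemma2}, because a tubular neighborhood of a hypersurface has disconnected complement and thus breaks the analytic-continuation step that underlies Lemma~\ref{L.approx}. Instead, as indicated in the remark preceding the corollary, one defines the local function as the first Dirichlet eigenfunction of $-\Delta$ on a suitably rescaled domain bounded by~$L$, following the construction in~\cite[Appendix~A]{Adv}. With that adjustment, which is already built into Theorem~\ref{T.high}, the argument of Section~\ref{S.lemma2} goes through and the corollary follows.
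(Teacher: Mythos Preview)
Your proposal is correct and matches the paper's own treatment: the corollary is stated immediately after Theorem~\ref{T.high} as the special case $d=3$, $n=1$, with no separate proof given. Your explicit verification of the hypotheses (connectedness by convention, and triviality of the normal bundle via the global outward normal coming from Jordan--Brouwer) and your remark on why the $n=1$ case requires the Dirichlet-eigenfunction construction rather than Cauchy--Kowalewski are exactly the points the paper leaves implicit or discusses just before the corollary.
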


\section*{Acknowledgments}

The authors are supported by the ERC Starting Grants~633152 (A.E.) and~335079
(D.H.\ and D.P.-S.). This work is supported in part by the
ICMAT--Severo Ochoa grant
SEV-2011-0087.

\bibliographystyle{amsplain}

\end{document}